\author{Linan Huang and Quanyan Zhu \\
\ \\
 Department of Electrical and Computer Engineering, Tandon School of Engineering, \\
New York University, Brooklyn, NY 11201 USA \\
Email: \{lh2328, qz494\}@nyu.edu
}
\newtheorem{theorem}{Theorem}
\newtheorem{lemma}{Lemma}
\newtheorem{proposition}{Proposition}
\newtheorem{definition}{Definition}
\begin{document}

\title{Convergence of Bayesian Nash Equilibrium in Infinite Bayesian Games under Discretization}

\maketitle

\begin{abstract}
  We prove the existence of Bayesian Nash Equilibrium (BNE) of general-sum Bayesian games with continuous types and finite actions under the conditions that the utility functions and the prior type distributions are continuous concerning the players' types. 
  Moreover, there exists a sequence of discretized Bayesian games whose BNE strategies converge weakly to a BNE strategy of the infinite Bayesian game.  
  Our proof establishes a connection between the equilibria of the infinite Bayesian game and those of finite approximations, which leads to an algorithm to construct $\varepsilon$-BNE of infinite Bayesian games by discretizing players' type spaces. 
  
\end{abstract}

\section{Introduction}
Bayesian games \cite{harsanyi1967games} have found wide application in auctions \cite{krishna2009auction},  wireless networks \cite{akkarajitsakul2011distributed}, cybersecurity \cite{huang2018analysis,huang2019adaptive}, and robotic systems \cite{huang2019dynamic}. 
In these applications, it is natural to model the incomplete information such as players' bids in auction theory as a continuous random variable. 
However, the existing computational techniques are mainly for finite Bayesian games where the action and the type spaces are both finite. 
For Bayesian games with continuous types, the equilibrium is usually computed under restrictive assumptions. 
For example, \cite{athey2001single} focuses on the single crossing condition and   
the authors in \cite{dd} restrict the type distribution to be piecewise linear with some prior domain knowledge of a qualitative model. 
Iterative methods and learning have also been applied. 
The authors in \cite{reeves2004computing} focus on the piecewise uniform type distribution and payoffs that are linear functions from players' types and actions. They apply an iterated best response to compute the BNE. 
The authors in \cite{rabinovich2013computing} restrict each player's utility to be independent of other's types and develop a fictitious play algorithm to learn pure-strategy equilibrium.

In this paper, we consider general Bayesian games with continuous types and prove the existence of BNE in these games. 
Comparing to previous works (see e.g., \cite{milgrom1985distributional,carbonell2018existence}) that prove the  existence of BNE in infinite Bayesian games, we further prove that there exists a sequence of discretized Bayesian games whose BNE strategies converge weakly to a BNE strategy of the infinite Bayesian game.  
Our  proof  further implies an algorithm to approximate the BNE of infinite Bayesian games by discretization. 
The convergence of equilibrium strategies by discretization or sampling has been shown in complete information games with continuous actions \cite{owen1976existence}, signaling games of certain classes \cite{manelli1996convergence}, and infinite Bayesian Stackelberg games \cite{kiekintveld2011approximation}. 
The authors in \cite{armantier2008approximation} define a new concept of constrained strategic equilibrium (CSE) for Bayesian games and propose sufficient conditions under which a sequence of CSEs converges toward a BNE. 
However, the convergence of BNE has not been shown in simultaneous-move Bayesian games of continuous types. 

After a proper reformulation, we obtain BNE in its distributional form, which enables us to adopt the key idea from \cite{owen1976existence}. Following a similar argument in \cite{owen1976existence}, our results in two-player general-sum infinite Bayesian games can be directly extended to the $N$-player case. 
Since there exists a one-to-one mapping from any set with the cardinality of the continuum to the unit interval $[0,1]$ (i.e., $\mathbb{R}^n$ and $[0,1]$ has the same cardinality), we can directly extend the convergence theorem to any compact joint type space of higher dimensions. 

\section{Bayesian Games with Continuous Types}
We consider the following Bayesian game $\Gamma:=<\mathcal{X},\mathcal{Y}, \Theta_1\times \Theta_2,b(\cdot), \allowbreak
\{\bar{u}^{x,y}(\cdot),\bar{v}^{x,y}(\cdot)\}_{x\in\mathcal{X},y\in \mathcal{Y}}>$ with a compact joint\footnote{The joint type space refers to the Cartesian product (denoted as $\times$) of each player $i$'s type space $\Theta_i$. Since the joint type space is compact, each $\Theta_i$ has to be compact.
}
type space $\Theta_1\times \Theta_2$ and two finite action spaces of $\mathcal{X}:=\{x_1,...,x_L\}$ and $\mathcal{Y}:=\{y_1,...,y_H\}$; i.e., the first and the second player have $L$ and $H$ actions to choose from and simultaneously take action $x\in \mathcal{X}$ and $y\in \mathcal{Y}$, respectively.   
The incomplete information of the game is represented by two single-dimensional continuous random variables $\tilde{\theta}_1\in \Theta_1, \tilde{\theta}_2\in \Theta_2$ whose joint distribution $b$ is assumed to be common knowledge and continuous over the joint type space $\Theta_1\times\Theta_2$. 
We require the marginal distribution to be positive, i.e.,  $\bar{b}_i(\theta_i):=\int_{\Theta_j} b(\theta_i,\theta_j)d\theta_j>0, \forall i\in \{1,2\}, \forall \theta_i\in \Theta_i$ and take $\Theta_1=\Theta_2=[0,1]$ without loss of generality. 
Player $i$ privately observes his type realization $\theta_i\in \Theta_i$ and knows that the other player $j$ has a type $\theta_j\in \Theta_j$ with a probability density of $b_i(\theta_j|\theta_i):= b(\theta_j,\theta_i)/ \bar{b}_i(\theta_i) \in \mathbb{R}^+_0$. 
Then, $b_i$ is a valid  conditional probability measure and we have $\int_{0}^1 b_i(\theta_j|\theta_i)d\theta_j=1, \forall \theta_i\in \Theta_i$. 

The utility functions $\bar{u}^{x,y}(\theta_1,\theta_2)\in \mathbb{R}^+_0$ and $\bar{v}^{x,y}(\theta_1,\theta_2)\in \mathbb{R}^+_0$ of the first and the second player, respectively, depend on players' actions $x\in \mathcal{X},y\in \mathcal{Y}$, and types $\theta_1\in \Theta_1,\theta_2\in \Theta_2$. 
We further assume that both players' utility functions  $\bar{u}^{x,y}(\theta_1,\theta_2)$ and $\bar{v}^{x,y}(\theta_1,\theta_2)$ are continuous over the joint type set $\Theta_1 \times \Theta_2$ for all actions $x\in \mathcal{X},y\in \mathcal{Y}$. 
Since \textit{a continuous function on a compact metric space is bounded and uniformly continuous}, we know that both players' utility functions  are bounded and uniformly continuous over the joint type set.  
Therefore, we can assume non-negative utility functions without loss of generality as we can always add a sufficiently large constant, which is guaranteed by the boundedness, to make them non-negative without any change to the equilibrium policy. 

The  behavioral strategies $\sigma_1:\Theta_1\mapsto \Delta \mathcal{X}$ and $\sigma_2:\Theta_2\mapsto  \Delta \mathcal{Y}$ of the first and the second player, respectively, map each player's type to the distribution of his action space. 
In particular, we denote $\sigma_1(x|\theta_1)\in \mathbb{R}^+_0$ (resp.  $\sigma_2(y|\theta_2)\in \mathbb{R}^+_0$) as the probability of player $1$ (resp. player $2$) taking action $x\in \mathcal{X}$ (resp. action $y\in \mathcal{Y}$) when his type is $\theta_1\in\Theta_1$ (resp. $\theta_2\in\Theta_2$). 
Obviously, we have $\sum_{x\in \mathcal{X}} \sigma_1(x|\theta_1)=1, \forall \theta_1\in \Theta_1$ and  $\sum_{y\in \mathcal{Y}} \sigma_2(y|\theta_2)=1, \forall \theta_2\in \Theta_2$. 
Define two players' expected utilities under any strategy pair $(\sigma_1,\sigma_2)$ as
\begin{equation}
\label{eq:r1r2}
     \begin{split}
         r_1(\theta_1,\sigma_1,\sigma_2):=\int_0^1 b_1(\theta_2|\theta_1) \sum_{x\in\mathcal{X}} \sigma_1 (x|\theta_1)  \sum_{y\in\mathcal{Y}} \sigma_2 (y|\theta_2) \bar{u}^{x,y}(\theta_1,\theta_2)  d\theta_2. \\
          r_2(\theta_2,\sigma_1,\sigma_2):=\int_0^1 b_2(\theta_1|\theta_2) \sum_{x\in\mathcal{X}} \sigma_1^* (x|\theta_1)  \sum_{y\in\mathcal{Y}} \sigma_2 (y|\theta_2) \bar{v}^{x,y}(\theta_1,\theta_2)  d\theta_1. 
     \end{split}
\end{equation}
  For player $i$ of type $\theta_i \in \Theta_i$, his
  best response strategy  $\sigma^*_i(\cdot|\theta_i)$  with respect to the  player's strategy $\sigma_j$ belongs to a set $\mathcal{B}_i(\theta_i,\sigma_j)$, i.e., 
\begin{equation}
\begin{split}
\label{eq:bestresponse1}
\sigma_i^*(\cdot|\theta_i)\in \mathcal{B}_i(\theta_i,\sigma_j):= \arg \max_{\sigma_i(\cdot|\theta_i)}  r_i(\theta_i,\sigma_i,\sigma_j). 
\end{split}
\end{equation}
For any given policy $\sigma_j$ of the other player $j$, player $i$'s best response set $\mathcal{B}_i(\theta_i,\sigma_j)$ under type $\theta_i$ is nonempty and contains a pure policy as shown in Lemma \ref{lemma:purepolicy}. Analogous  statement holds for player $2$.  
\begin{lemma}[Pure Policy in Best Response Set]
\label{lemma:purepolicy}
If the second player's strategy $\sigma_2$ is common knowledge, then player $1$'s best response set $\mathcal{B}_1(\theta_1,\sigma_2)$ under any $\theta_1\in \Theta_1$ contains the following pure policy 
\begin{equation*}
    arg\max_{x\in \mathcal{X}}  \int_0^1   \int_0^1 b_1(\theta_2|\theta_1)  \sum_{y\in\mathcal{Y}} \sigma^*_2 (y|\theta_2) \bar{u}^{x,y}(\theta_1,\theta_2)  d\theta_2.  
\end{equation*} 
\end{lemma}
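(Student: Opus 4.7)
The plan is to exploit the linearity of $r_1(\theta_1,\sigma_1,\sigma_2)$ in the mixed strategy $\sigma_1(\cdot|\theta_1)$. Since the action set $\mathcal{X}$ is finite, the inner sum over $x$ and the integral over $\theta_2$ in the definition of $r_1$ in \eqref{eq:r1r2} can be exchanged without any measure-theoretic subtlety, which rewrites the expected utility as a convex combination
\[
r_1(\theta_1,\sigma_1,\sigma_2)=\sum_{x\in\mathcal{X}}\sigma_1(x|\theta_1)\,f(x;\theta_1,\sigma_2),
\]
where the coefficients are defined by
\[
f(x;\theta_1,\sigma_2):=\int_0^1 b_1(\theta_2|\theta_1)\sum_{y\in\mathcal{Y}}\sigma_2(y|\theta_2)\,\bar{u}^{x,y}(\theta_1,\theta_2)\,d\theta_2.
\]

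Next, I would argue that $f(x;\theta_1,\sigma_2)$ is a well-defined finite real number for each $x$: the integrand is measurable, bounded (since $\bar{u}^{x,y}$ is continuous on the compact set $\Theta_1\times\Theta_2$ and $\sigma_2(y|\theta_2)\in[0,1]$), and $b_1(\theta_2|\theta_1)$ is a valid probability density. Hence the maximum over $x\in\mathcal{X}$ exists because $\mathcal{X}$ is finite, and the argmax set is nonempty. Let $x^\star$ be any element of this argmax.

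Then, because $\sigma_1(\cdot|\theta_1)$ is a probability distribution on $\mathcal{X}$, we have the standard bound
\[
\sum_{x\in\mathcal{X}}\sigma_1(x|\theta_1)\,f(x;\theta_1,\sigma_2)\le \max_{x\in\mathcal{X}} f(x;\theta_1,\sigma_2)=f(x^\star;\theta_1,\sigma_2),
\]
with equality attained by the degenerate (pure) distribution that places mass $1$ on $x^\star$. This immediately shows that the pure strategy $\sigma_1^\star(\cdot|\theta_1)=\delta_{x^\star}$ lies in $\mathcal{B}_1(\theta_1,\sigma_2)$, which is exactly the claim of the lemma.

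There is no real obstacle here; the statement is essentially the observation that maximizing a linear functional over a simplex is attained at a vertex. The only point that deserves care is justifying the interchange of the finite sum over $x$ and the integral over $\theta_2$, together with finiteness of $f$, both of which follow from boundedness of $\bar{u}^{x,y}$ (noted in the paragraph preceding the lemma) and the fact that $b_1(\cdot|\theta_1)$ is a probability density on $[0,1]$.
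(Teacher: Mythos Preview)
Your argument is correct. The paper actually states Lemma~\ref{lemma:purepolicy} without proof, treating it as an elementary observation, so there is nothing to compare against; your write-up supplies exactly the standard justification (linearity of $r_1$ in $\sigma_1(\cdot|\theta_1)$, finiteness of $\mathcal{X}$, and the fact that a linear functional on a simplex attains its maximum at a vertex), and the care you take with finiteness of $f$ and the interchange of the finite sum with the integral is appropriate.
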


A strategy pair consists a BNE if they are best response to each other as defined below. 
\begin{definition}[Bayesian Nash Equilibrium]
\label{def:BNE}
    A strategy pair $(\sigma_1^*,\sigma_2^*)$ consists a BNE of infinite Bayesian game $\Gamma$ if
    $\sigma_i^*(\cdot|\theta_i)\in \mathcal{B}_i(\theta_i,\sigma^*_j), \forall i,j\in \{1,2\}, i\neq j$, for almost\footnote{“Almost” in this context means that the probability of all types for which the strategy does not prescribe an optimal action is zero. For example, if player $i$'s strategies differ only at countable points over $\Theta_i$, then they result in the same value of Riemann integration in \eqref{eq:r1r2}. } every $\theta_1\in \Theta_1$ and  $\theta_2\in \Theta_2$. 
    \end{definition} 
Since $\bar{b}_i(\theta_i)>0, \forall \theta_i\in \Theta, \forall i\in \{1,2\}$, Lemma \ref{lemma:BNE=NE} below shows that we can compute BNE strategy pair $(\sigma_1^*,\sigma_2^*)$ through the following integration form in \eqref{eq:bestresponse1integral} and \eqref{eq:bestresponse2integral}; i.e., no player has a profitable deviation after he knows his private type if and only if he does not benefit from any deviation before knowing his type \cite{harsanyi1967games}.  
\begin{equation}
\begin{split}
\label{eq:bestresponse1integral}
 \int_0^1 \int_0^1 b (\theta_1,\theta_2)    \sum_{x\in\mathcal{X}} \sigma^*_1(x|\theta_1)  \sum_{y\in\mathcal{Y}} \sigma^*_2 (y|\theta_2) \bar{u}^{x,y}(\theta_1,\theta_2)  d\theta_1  d\theta_2 \\
      = \max_{\sigma_1}
     \int_0^1   \int_0^1 b (\theta_1,\theta_2) \sum_{x\in\mathcal{X}} \sigma_1 (x|\theta_1)  \sum_{y\in\mathcal{Y}} \sigma^*_2 (y|\theta_2) \bar{u}^{x,y}(\theta_1,\theta_2)  d\theta_1   d\theta_2     ,
\end{split}
\end{equation}
and 
\begin{equation}
\begin{split}
\label{eq:bestresponse2integral}
      \int_0^1  \int_0^1 b (\theta_1,\theta_2)  \sum_{x\in\mathcal{X}} \sigma^*_1(x|\theta_1)  \sum_{y\in\mathcal{Y}} \sigma^*_2 (y|\theta_2) \bar{v}^{x,y}(\theta_1,\theta_2)  d\theta_1 d\theta_2 \\
      = \max_{\sigma_2}
      \int_0^1 \int_0^1 b (\theta_1,\theta_2)  \sum_{x\in\mathcal{X}} \sigma_1^* (x|\theta_1)  \sum_{y\in\mathcal{Y}} \sigma_2 (y|\theta_2) \bar{v}^{x,y}(\theta_1,\theta_2)  d\theta_1  d\theta_2    .   
\end{split}
\end{equation}

    \begin{lemma}[BNE is equivalent to Nash Equilibrium]
    \label{lemma:BNE=NE}
     A strategy pair $(\sigma_1^*,\sigma_2^*)$ consists a BNE if and only if \eqref{eq:bestresponse1integral} and \eqref{eq:bestresponse2integral} holds. 
    \end{lemma}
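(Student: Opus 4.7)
The plan is to show the equivalence by converting between the ex-ante integrated payoff and the interim payoff $r_i(\theta_i,\cdot,\cdot)$, and to exploit the positivity of the marginal $\bar b_i$ together with the finiteness of the action space.

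First, I would establish the basic identity that for any strategy pair $(\sigma_1,\sigma_2)$,
\begin{equation*}
\int_0^1\!\!\int_0^1 b(\theta_1,\theta_2)\sum_{x\in\mathcal{X}}\sigma_1(x|\theta_1)\sum_{y\in\mathcal{Y}}\sigma_2(y|\theta_2)\bar{u}^{x,y}(\theta_1,\theta_2)\,d\theta_1 d\theta_2
= \int_0^1 \bar{b}_1(\theta_1)\, r_1(\theta_1,\sigma_1,\sigma_2)\, d\theta_1,
\end{equation*}
which follows directly by factoring $b(\theta_1,\theta_2)=\bar{b}_1(\theta_1)\,b_1(\theta_2|\theta_1)$ and applying Fubini's theorem (applicable since the utilities are bounded). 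An analogous identity holds for the second player's payoff with $\bar b_2$. This reduces the ex-ante formulation \eqref{eq:bestresponse1integral}--\eqref{eq:bestresponse2integral} to the statement that $\sigma_i^*$ maximizes the weighted integral $\int_0^1 \bar b_i(\theta_i)\,r_i(\theta_i,\sigma_1,\sigma_2^*)\,d\theta_i$ over admissible $\sigma_i$.

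For the forward direction ($\Rightarrow$), suppose $(\sigma_1^*,\sigma_2^*)$ is a BNE. Then for almost every $\theta_1$, $r_1(\theta_1,\sigma_1^*,\sigma_2^*)\ge r_1(\theta_1,\sigma_1,\sigma_2^*)$ for every alternative $\sigma_1$. Multiplying by the nonnegative weight $\bar b_1(\theta_1)$ and integrating yields the ex-ante inequality, i.e.\ \eqref{eq:bestresponse1integral}; the symmetric argument gives \eqref{eq:bestresponse2integral}.

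For the converse ($\Leftarrow$), I would argue by contrapositive. Suppose $\sigma_1^*$ is not a best response on a set $E\subseteq\Theta_1$ of positive Lebesgue measure. By Lemma \ref{lemma:purepolicy}, for each $\theta_1$ the quantity $g(\theta_1,x):=\int_0^1 b_1(\theta_2|\theta_1)\sum_y\sigma_2^*(y|\theta_2)\bar u^{x,y}(\theta_1,\theta_2)d\theta_2$ is continuous in $\theta_1$ for each of the finitely many $x\in\mathcal{X}$, so a measurable pure best response $x^\star(\theta_1)\in\arg\max_x g(\theta_1,x)$ exists (e.g.\ by breaking ties via the fixed enumeration of $\mathcal{X}$). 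Define the modified strategy $\tilde\sigma_1(x|\theta_1):=\mathbf{1}\{x=x^\star(\theta_1)\}$ on $E$ and $\tilde\sigma_1(\cdot|\theta_1):=\sigma_1^*(\cdot|\theta_1)$ on $\Theta_1\setminus E$. Then $r_1(\theta_1,\tilde\sigma_1,\sigma_2^*)>r_1(\theta_1,\sigma_1^*,\sigma_2^*)$ for $\theta_1\in E$ and equality elsewhere. Because $\bar b_1(\theta_1)>0$ on $\Theta_1$ and $E$ has positive measure, integrating against $\bar b_1$ gives a strict improvement in the ex-ante payoff, contradicting \eqref{eq:bestresponse1integral}. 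The same argument on player 2 finishes the proof.

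The main obstacle I anticipate is the measurable-selection step in the converse: one must exhibit a genuine deviating strategy $\tilde\sigma_1$ that is measurable in $\theta_1$ so that the ex-ante integral is well defined. Lemma \ref{lemma:purepolicy} together with the finiteness of $\mathcal{X}$ makes this straightforward (a measurable argmax over finitely many continuous functions), so no sophisticated selection theorem is required; if $\mathcal{X}$ were infinite, this step would need considerably more care.
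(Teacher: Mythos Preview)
Your proposal is correct and follows essentially the same contrapositive strategy as the paper: both construct a profitable deviating strategy on a positive-measure set of types to contradict the ex-ante optimality conditions. The only minor difference is that the paper deviates to a single fixed action $y_l$ on the bad set (which works because $\mathcal{Y}$ is finite, so one of the finitely many ``better action'' subsets must have positive measure), whereas you deviate to a type-dependent pure best response $x^\star(\theta_1)$ and take care to note its measurability; both variants are valid and neither adds substantive new content over the other.
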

     \begin{proof}
     The `only if' part (sufficiency) is straight forward as \eqref{eq:bestresponse1} results in 
     \eqref{eq:bestresponse1integral} and \eqref{eq:bestresponse2integral}. 
     To prove the `if' part (necessity), we show that if $(\sigma_1^*,\sigma_2^*)$ is not a BNE defined in Definition \ref{def:BNE}, then  \eqref{eq:bestresponse1integral} and \eqref{eq:bestresponse2integral} cannot hold in the same time. 
     As $(\sigma_1^*,\sigma_2^*)$ is not a BNE, there exists a measurable set $\hat{\Theta}_i\subseteq \Theta_i$ and  at least one player $i$ (assume the second player) who has a profitable deviation\footnote{Since the best response of any give policy contains a pure policy as shown in Lemma \ref{lemma:purepolicy}, we can restrict the profitable deviation to an action without loss of generality. } 
     from $\sigma_2^*(\cdot|\theta_2)$ to an action $y_{{l}}\in\mathcal{Y}$ when $\theta_2\in \hat{\Theta}_2$, i.e.,
     \begin{equation*}
        \begin{split}
         &\int_{\hat{\Theta}}   \bar{b}_2(\theta_1) 
         \bigg[ \int_0^1 b_1(\theta_2|\theta_1) \sum_{x\in\mathcal{X}} \sigma^*_1(x|\theta_1)  \bar{v}^{x,y_{{l}}}(\theta_1,\theta_2)  d\theta_2 \bigg] d\theta_1 \\
        >&
        \int_{\hat{\Theta}}  \bar{b}_2(\theta_1) 
        \bigg[ \int_0^1 b_1(\theta_2|\theta_1)
      \sum_{x\in\mathcal{X}} \sigma^*_1(x|\theta_1)  \sum_{y\in\mathcal{Y}} \sigma^*_2 (y|\theta_2) \bar{v}^{x,y}(\theta_1,\theta_2)  d\theta_2 \bigg] d\theta_1. 
        \end{split}
     \end{equation*}
     Consider a strategy $\hat{\sigma_2}$ where
      $\hat{\sigma_2}(y|\theta_2)=\mathbf{1}_{\{y=y_{{l}}\}}, \forall y\in \mathcal{Y}, \theta_2\in \hat{\Theta}_2$ and $\hat{\sigma_2}={\sigma_2},\forall \theta_2\notin \hat{\Theta}_2$; i.e.,  $\hat{\sigma_2}$ is identical to  ${\sigma_2}$ except over the set $\hat{\Theta}_2$. Then, we know that 
       \begin{equation*}
        \begin{split}
        & \int_0^1 \int_0^1 b(\theta_1,\theta_2) \sum_{x\in\mathcal{X}} \sigma^*_1(x|\theta_1)  \sum_{y\in\mathcal{Y}} \hat{\sigma_2} (y|\theta_2) \bar{v}^{x,y}(\theta_1,\theta_2)  d\theta_1 d\theta_2 \\
        =& \int_{\hat{\Theta_2}} \hat{b}_2(\theta_1)  \bigg[\int_0^1 b_1(\theta_2|\theta_1) \sum_{x\in\mathcal{X}} \sigma^*_1(x|\theta_1)  \sum_{y\in\mathcal{Y}} \hat{\sigma_2} (y|\theta_2) \bar{v}^{x,y}(\theta_1,\theta_2)  d\theta_2  \bigg] d\theta_1 \\
        +& \int_{\Theta_2\setminus \hat{\Theta}_2} \hat{b}_2(\theta_1) \bigg[ \int_0^1 b_1(\theta_2|\theta_1) \sum_{x\in\mathcal{X}} \sigma^*_1(x|\theta_1)  \sum_{y\in\mathcal{Y}} \sigma^*_2 (y|\theta_2) \bar{v}^{x,y}(\theta_1,\theta_2)  d\theta_2 \bigg] d\theta_1 \\
        >&    \int_0^1  \int_0^1  b(\theta_1,\theta_2) \sum_{x\in\mathcal{X}} \sigma^*_1(x|\theta_1)  \sum_{y\in\mathcal{Y}} {\sigma^*_2} (y|\theta_2) \bar{v}^{x,y}(\theta_1,\theta_2)  d\theta_1 d\theta_2, 
        \end{split}
     \end{equation*}    
     which contradicts \eqref{eq:bestresponse2integral}. 
     \end{proof}

\subsection{Equivalent Reformulation in Distributional Form}
\label{sec:reformulation}
Since both integrands in \eqref{eq:bestresponse1integral} and \eqref{eq:bestresponse2integral} are non-negative, we can exchange the summation of actions and the integration of types according to Fubini's theorem. 
Define ${u}^{x,y}(\theta_1,\theta_2):=b(\theta_1,\theta_2) \bar{u}^{x,y}(\theta_1,\theta_2)$ and ${v}^{x,y}(\theta_1,\theta_2):=b(\theta_1,\theta_2) \bar{v}^{x,y}(\theta_1,\theta_2)$. 
Since a finite production of continuous functions is still continuous, ${u}^{x,y}$ and ${v}^{x,y}$ are both continuous over the joint set $\Theta_1\times\Theta_2$.  
By assimilating the prior distribution of types into the players' utility functions, we can discretize the continuous type set uniformly as shown Section \ref{sec:owen}.
Let represent the first player's behavioral strategy $\sigma_1(x|\theta_1)$ as a 
function of $\theta_1$ parameterized by action $x$, i.e., $f^x(\theta_1)$.  
Then we can define a non-decreasing bounded function $F^x(\theta_1):=\int_{0}^{{\theta}_1} f^x(\tilde{\theta}_1)d\tilde{\theta}_1 $ of $\theta_1$ parameterized by action $x$. 
Since $\sum_{x\in \mathcal{X}} f^x(\theta_1)=1, \forall \theta_1\in \Theta_1$, and $f^x(\theta_1)\geq 0, \forall \theta_1, \forall x\in \mathcal{X}$, we obtain $\sum_{x\in \mathcal{X}} F^x(\theta_1)=\theta_1, \forall \theta_1\in \Theta_1$ by Fubini's theorem. We use $\mathcal{F}^{\mathcal{X}}$ to denote the set of functions $F^{\mathcal{X}}:=\{F^{x}\}_{x\in \mathcal{X}}$ that satisfy the above conditions. 
Similarly, we can represent the second player's strategy $\sigma_2(y|\theta_2)$ as $g^y(\theta_2)$ and define $G^y(\theta_2):=\int_{0}^{{\theta}_2} f^x(\tilde{\theta}_2)d\tilde{\theta}_2$ as the non-decreasing bounded function of $\theta_2$. 
Analogously, we have $G^y(0)=0$ for any $y\in \mathcal{Y}$ and $\sum_{y\in \mathcal{Y}} G^y(\theta_2)=\theta_2, \forall \theta_2\in \Theta_2$. 
We use $\mathcal{G}^{\mathcal{Y}}$ to denote the set of functions $G^{ \mathcal{Y}}:=\{G^y\}_{y\in \mathcal{Y}}$ that satisfy the above conditions. 
Then, we can recast a BNE strategy pair $(F_0^{ \mathcal{X}}\in \mathcal{F}^{\mathcal{X}},G_0^{ \mathcal{Y}}\in \mathcal{G}^{\mathcal{Y}})$ in the following \textit{distributional form}, i.e.,  
\begin{equation*}
\begin{split}
\label{eq:BNEinVector}
     & \sum_{x\in\mathcal{X}} \sum_{y\in\mathcal{Y}}     \int_0^1   \int_0^1  {u}^{x,y}(\theta_1,\theta_2)  dF_0^x(\theta_1) dG_0^y(\theta_2)
     = \max_{F^{\mathcal{X}} \in \mathcal{F}^{\mathcal{X}} } \sum_{x\in\mathcal{X}} \sum_{y\in\mathcal{Y}}     \int_0^1   \int_0^1  {u}^{x,y}(\theta_1,\theta_2)  dF^x(\theta_1) dG_0^y(\theta_2), 
     \\
     &
     \sum_{x\in\mathcal{X}} \sum_{y\in\mathcal{Y}}     \int_0^1   \int_0^1  {v}^{x,y}(\theta_1,\theta_2)  dF_0^x(\theta_1) d G_0^y(\theta_2)
     = \max_{G^{\mathcal{Y}} \in \mathcal{G}^{\mathcal{Y}} } \sum_{x\in\mathcal{X}} \sum_{y\in\mathcal{Y}}     \int_0^1   \int_0^1  {v}^{x,y}(\theta_1,\theta_2)  d F_0^x(\theta_1) dG^y(\theta_2).  
     \end{split}
\end{equation*}

Due to the difficulty of computing an exact BNE, it is common to consider an approximate equilibrium defined below. 
\begin{definition}[$\varepsilon$-BNE]
\label{def:epsBNE}
A strategy pair $(F_0^{ \mathcal{X}}\in \mathcal{F}^{\mathcal{X}},G_0^{ \mathcal{Y}}\in \mathcal{G}^{\mathcal{Y}})$ consists a $\varepsilon$-BNE if for all $(F^{ \mathcal{X}}\in \mathcal{F}^{\mathcal{X}},G^{ \mathcal{Y}}\in \mathcal{G}^{\mathcal{Y}})$, the following holds. 
\begin{equation*}
\begin{split}
\label{eq:eps_BNE}
     & \sum_{x\in\mathcal{X}} \sum_{y\in\mathcal{Y}}     \int_0^1   \int_0^1  {u}^{x,y}(\theta_1,\theta_2)  dF_0^x(\theta_1) dG_0^y(\theta_2)
     \geq \sum_{x\in\mathcal{X}} \sum_{y\in\mathcal{Y}}     \int_0^1   \int_0^1  {u}^{x,y}(\theta_1,\theta_2)  dF^x(\theta_1) dG_0^y(\theta_2)- \varepsilon, 
     \\
     &
     \sum_{x\in\mathcal{X}} \sum_{y\in\mathcal{Y}}     \int_0^1   \int_0^1  {v}^{x,y}(\theta_1,\theta_2)  dF_0^x(\theta_1) d G_0^y(\theta_2)
     \geq  \sum_{x\in\mathcal{X}} \sum_{y\in\mathcal{Y}}     \int_0^1   \int_0^1  {v}^{x,y}(\theta_1,\theta_2)  d F_0^x(\theta_1) dG^y(\theta_2) - \varepsilon.  
     \end{split}
\end{equation*}
\end{definition}

\section{Extension of Helly's Selection Theorem}
Based on the reformulation of BNE in Section \ref{sec:reformulation}, the players' strategies $F^{\mathcal{X}}$ and $G^{\mathcal{Y}}$ become $L$- and $H$-dimensional vectors of constrained functions, respectively. 
Thus, we extend the original Helly's selection theorem in the following lemma to fit the vector of functions with constraints. 


\begin{lemma}[Convergence on Countable Set]
\label{lemma:countable}
Consider the finite  set $\mathcal{X}:=\{x_1,...,x_L\}$ and a sequence of functions $\{F^{\mathcal{X}}_n\}_{n\in \mathcal{Z}^+}$, where $F_n^{\mathcal{X}}\in \mathcal{F}^{\mathcal{X}}$ for each $n\in \mathcal{Z}^+$. Let $\mathcal{D}:=\{\theta^1,\theta^2,...\}$ be any countable subset of $ \Theta$. Then there is a subsequence of  $\{F^{\mathcal{X}}_n\}_{n\in \mathcal{Z}^+}$, i.e., $\{F^{\mathcal{X}}_{n_k}\}_{k=1}^{\infty}$ such that $\bar{F}_0^{x}(\theta):= \lim_{k\rightarrow \infty} F^{x}_{n_k}(\theta)$ exists for any $\theta\in \mathcal{D},  x\in \mathcal{X}$. Moreover, the limit function $\bar{F}_0^{\mathcal{X}}\in \mathcal{F}^{\mathcal{X}}$. 
\end{lemma}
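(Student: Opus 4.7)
The plan is to run a Cantor diagonal extraction on the countable set $\mathcal{D}$ combined with the finite action set $\mathcal{X}$, and then check that the three defining properties of $\mathcal{F}^{\mathcal{X}}$—non-negativity with vanishing at the origin, monotonicity, and the normalization $\sum_{x\in\mathcal{X}} F^x(\theta)=\theta$—all pass to the pointwise limit. The first step is to record the uniform bound: since each $F_n^x$ is non-decreasing with $F_n^x(0)=0$ and $F_n^x(\theta)\leq \sum_{x'\in\mathcal{X}} F_n^{x'}(\theta)=\theta\leq 1$, every value lies in the compact interval $[0,1]$. This boundedness is what fuels the Bolzano--Weierstrass extractions below.

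Next I would enumerate $\mathcal{D}=\{\theta^1,\theta^2,\ldots\}$ and build a nested family of subsequences. At stage $j$, given the subsequence $\{n_k^{(j-1)}\}$ from the previous stage, I apply Bolzano--Weierstrass in $[0,1]^L$ to the vector $\bigl(F_{n}^{x_1}(\theta^j),\ldots,F_{n}^{x_L}(\theta^j)\bigr)$ to extract a further subsequence $\{n_k^{(j)}\}$ along which every coordinate converges. Taking the diagonal $n_k:=n_k^{(k)}$ yields a single subsequence such that $F_{n_k}^x(\theta)\to\bar{F}_0^x(\theta)$ simultaneously for every $\theta\in\mathcal{D}$ and every $x\in\mathcal{X}$; here the finiteness of $\mathcal{X}$ ensures each stage requires only finitely many coordinate extractions.

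It remains to verify $\bar{F}_0^{\mathcal{X}}\in\mathcal{F}^{\mathcal{X}}$. Non-negativity and, when applicable, $\bar{F}_0^x(0)=0$ are inherited directly from the $F_n^x$. Monotonicity on $\mathcal{D}$ follows because $F_{n_k}^x(\theta)\leq F_{n_k}^x(\theta')$ whenever $\theta\leq\theta'$ in $\mathcal{D}$, and weak inequalities pass through pointwise limits. The normalization
\[
\sum_{x\in\mathcal{X}}\bar{F}_0^x(\theta)=\lim_{k\to\infty}\sum_{x\in\mathcal{X}}F_{n_k}^x(\theta)=\theta
\]
for each $\theta\in\mathcal{D}$ follows from linearity of limits applied to the \emph{finite} sum over $\mathcal{X}$; this is precisely the step that leverages $|\mathcal{X}|=L<\infty$. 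Finally, since $\bar{F}_0^x$ is monotone and bounded on $\mathcal{D}$, it admits a canonical monotone extension to all of $\Theta=[0,1]$ (e.g.\ $\bar{F}_0^x(\theta):=\sup\{\bar{F}_0^x(\theta'):\theta'\in\mathcal{D},\theta'\leq\theta\}$), and the three properties persist on the extension, placing $\bar{F}_0^{\mathcal{X}}$ in $\mathcal{F}^{\mathcal{X}}$.

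The only real obstacle is bookkeeping: making sure a \emph{single} diagonal subsequence handles all pairs $(\theta,x)\in\mathcal{D}\times\mathcal{X}$ and that the vector-valued constraint $\sum_{x\in\mathcal{X}} F_n^x(\theta)=\theta$—absent from the scalar Helly theorem—is preserved in the limit. Both concerns are resolved by the finiteness of $\mathcal{X}$ combined with the standard diagonal argument, so no genuinely new analytical input beyond Bolzano--Weierstrass is required.
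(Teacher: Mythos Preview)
Your proposal is correct and follows essentially the same approach as the paper: Bolzano--Weierstrass on the vector $(F_n^{x_1}(\theta^d),\ldots,F_n^{x_L}(\theta^d))\in[0,1]^L$ combined with a Cantor diagonalization over the countable set $\mathcal{D}$, followed by passing the defining constraints of $\mathcal{F}^{\mathcal{X}}$ through the pointwise limit. The only minor difference is that you explicitly include the monotone extension from $\mathcal{D}$ to all of $\Theta$, whereas the paper defers that step to the proof of the subsequent theorem; your version is in fact a bit more faithful to the lemma's claim that $\bar{F}_0^{\mathcal{X}}\in\mathcal{F}^{\mathcal{X}}$.
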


\begin{proof}
With a little abuse of notation, the vector $F^{\mathcal{X}}_{n}(\theta^d):=[F^{x_1}_n(\theta^d),\cdots,  F^{x_L}_n(\theta^d)]$ at $\theta^d\in \mathcal{D}, d\in \{1,2,\cdots\}, \forall n\in \mathbb{Z}^+$, belongs to a subset of $\mathbb{R}^L$, i.e., $\mathcal{F}^{\mathcal{X}}(\theta^d)$, 
that is closed and bounded. 
Then the subset must be  sequentially compact based on Bolzano–Weierstrass theorem and every sequence of points in this subset has a convergent subsequence to a point in the subset. 
Thus, we know that there exist a subsequence $F^{\mathcal{X}}_{n^d_k}(\theta^d)$ converge to $\bar{F}_0^{\mathcal{X}}(\theta^d)\in \mathcal{F}^{\mathcal{X}}(\theta^d)$.  
Then, we can apply the standard diagonalization argument to repeatedly find subsequence from subsequence so that there exist a final subsequence $n_k$ that makes $F^{\mathcal{X}}_{n_k}(\theta)$ converges to $\bar{F}_0^{\mathcal{X}}(\theta)\in \mathcal{F}^{\mathcal{X}}(\theta)$ for all $\theta\in \mathcal{D}$.  
Note that $\bar{F}_0^{x}$ is non-decreasing with respect to $\theta\in \mathcal{D}$ for each $x\in \mathcal{X}$ as the inequality is preserved in the limit; i.e., if $\theta^{d_1}<\theta^{d_2}\in \mathcal{D}$, then $\bar{F}_0^{x}(\theta^{d_1})=\lim_{k\rightarrow \infty} F^{x}_{n_k}(\theta^{d_1})\leq \lim_{k\rightarrow \infty} F^{x}_{n_k}(\theta^{d_2})=\bar{F}_0^{x}(\theta^{d_2})$.  
\end{proof}

\begin{theorem}[Convergence on Compact Set]
\label{thm:first thm}
Consider  $\Theta:=[0,1]$,  finite  set $\mathcal{X}:=\{x_1,...,x_L\}$, and a sequence of functions $\{F^{\mathcal{X}}_n\}_{n\in \mathcal{Z}^+}$, where $F_n^{\mathcal{X}}\in \mathcal{F}^{\mathcal{X}}$ for each $n\in \mathcal{Z}^+$. 
Then, some subsequence of $\{F^{\mathcal{X}}_n\}_{n\in \mathcal{Z}^+}$, i.e., $\{F^{\mathcal{X}}_{n_k}\}_{k=1}^{\infty}$, converges point-wise to a non-decreasing bounded function $F^{\mathcal{X}}_0\in \mathcal{F}^{\mathcal{X}}$, i.e., $F^{x}_{0}(\theta)= \lim_{k\rightarrow \infty} F^{x}_{n_k}(\theta), \forall x\in \mathcal{X}, \forall \theta\in \Theta$. 
\end{theorem}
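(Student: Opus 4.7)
The plan is to adapt the classical proof of Helly's selection theorem to the vector-valued, linearly-constrained setting of $\mathcal{F}^{\mathcal{X}}$, using Lemma \ref{lemma:countable} as the main workhorse. First I would pick a countable dense subset $\mathcal{D}\subset[0,1]$ containing $0$ and $1$ (e.g.\ $\mathcal{D}:=(\mathbb{Q}\cap[0,1])\cup\{0,1\}$) and apply Lemma \ref{lemma:countable} to extract a subsequence $\{F^{\mathcal{X}}_{n_k}\}$ that converges at every $\theta\in\mathcal{D}$ to some vector $\bar{F}^{\mathcal{X}}_0$ which is non-decreasing, non-negative, and satisfies $\sum_{x\in\mathcal{X}}\bar{F}^{x}_0(\theta)=\theta$ on $\mathcal{D}$. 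I would then extend $\bar{F}^x_0$ to all of $[0,1]$ by the right-continuous formula $F^x_0(\theta):=\inf\{\bar{F}^x_0(q):q\in\mathcal{D},\,q>\theta\}$ for $\theta\in[0,1)$, with $F^x_0(1):=\bar{F}^x_0(1)$. The density of $\mathcal{D}$ together with continuity of $\theta\mapsto\theta$ ensures that the sum constraint $\sum_{x}F^{x}_0(\theta)=\theta$ survives this extension pointwise on $[0,1]$, while monotonicity and non-negativity are immediate.

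Next I would run the standard Helly sandwich argument componentwise: for each $x\in\mathcal{X}$ and each continuity point $\theta$ of $F^x_0$, given $\varepsilon>0$ choose $q_1<\theta<q_2$ in $\mathcal{D}$ with $F^x_0(q_2)-F^x_0(q_1)<\varepsilon$; monotonicity of every $F^x_{n_k}$ together with convergence on $\mathcal{D}$ then squeezes $F^x_{n_k}(\theta)\to F^x_0(\theta)$. This already yields the conclusion at all but countably many points, and the bulk of the work is complete. What remains is to upgrade from ``continuity points of $F^x_0$'' to ``every $\theta\in[0,1]$'', which is where I expect the main difficulty to lie.

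The main obstacle is the countable set $\mathcal{E}:=\bigcup_{x\in\mathcal{X}}\{\theta:F^x_0\text{ is discontinuous at }\theta\}$; at such points the sandwich collapses and the subsequence need not converge. I would overcome this by applying Lemma \ref{lemma:countable} a second time, now to $\{F^{\mathcal{X}}_{n_k}\}$ restricted to the countable set $\mathcal{E}$, to pass to a further subsequence that converges at every $\theta\in\mathcal{E}$, and then redefine $F^x_0(\theta)$ at points of $\mathcal{E}$ to equal these new limits. The delicate point is verifying that after this redefinition the limit still lies in $\mathcal{F}^{\mathcal{X}}$: monotonicity and non-negativity of each $F^x_0$ are preserved because weak inequalities pass to pointwise limits, and the linear constraint $\sum_{x}F^{x}_0(\theta)=\theta$ is preserved at every $\theta\in\mathcal{E}$ because $\sum_{x}F^{x}_{n_k}(\theta)=\theta$ holds along the further subsequence term by term. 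Passing to this last subsequence preserves the convergence already obtained at points outside $\mathcal{E}$, yielding pointwise convergence on all of $[0,1]$ to an $F^{\mathcal{X}}_0\in\mathcal{F}^{\mathcal{X}}$, as required.
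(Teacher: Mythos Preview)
Your proposal is correct and follows essentially the same route as the paper's proof: apply Lemma~\ref{lemma:countable} on a countable dense set, extend to $[0,1]$, run the Helly sandwich argument at continuity points, and then invoke Lemma~\ref{lemma:countable} a second time on the (countable) set of discontinuities to finish. The only cosmetic differences are that the paper extends via $F^x_0(\alpha)=\sup_{\beta\le\alpha,\,\beta\in\mathcal{D}}\bar F^x_0(\beta)$ rather than your $\inf$ from above, and that you are more explicit than the paper about redefining $F^x_0$ at the discontinuity points and re-checking the constraint $\sum_x F^x_0(\theta)=\theta$ there.
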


\begin{proof}
Let $\mathcal{D}:=\mathbb{Q} \cap [0,1]$, then $\mathcal{D}$ is countable where $\mathbb{Q}$ represents the set of rational number. 
Then, based on Lemma \ref{lemma:countable}, there exists a subsequence  $\{F^{\mathcal{X}}_{n_{k^1}}\}_{k^1=1}^{\infty}$ that converges to $\bar{F}_0^{\mathcal{X}}\in \mathcal{F}^{\mathcal{X}}$ if $\theta\in \mathcal{D}$. 
Next, we need to extend the function $\bar{F}_0^{\mathcal{X}}$ defined on discrete set $\mathcal{D}$ to a function $F^{\mathcal{X}}_0$ defined over the continuous region by connecting the dots, i.e., $F^{x}_0(\alpha)=\sup_{\beta\leq \alpha, \beta\in \mathcal{D}} \bar{F}_0^x(\beta), \forall \alpha\in \Theta, \forall x\in \mathcal{X}$. 
Then, $F^{\mathcal{X}}_0(\theta)=\bar{F}_0^{\mathcal{X}}(\theta), \forall \theta \in {\mathcal{D}}$, and $F^{\mathcal{X}}_0\in \mathcal{F}^{\mathcal{X}}$ is also element-wise non-decreasing with respect to $\theta\in \Theta$
as $\alpha < \gamma$ leads to 
$
    F^{x}_0(\alpha)=\sup_{\beta\leq \alpha, \beta\in \mathcal{D}} \bar{F}_0^x(\beta)\leq \sup_{\beta\leq \gamma, \beta\in \mathcal{D}} \bar{F}_0^x(\beta)= F^{x}_0(\gamma), \forall x\in \mathcal{X}. 
$

Note that by connecting discrete dots, $F^{\mathcal{X}}_0(\theta)$ is right continuous and there are countable jumps at $\theta\in \mathcal{D}$. We first show that for each $x\in \mathcal{X}$, if $F^{x}_0$ is continuous at $\alpha\in \Theta$, then there exists a subsequence $n_{k^2}$ of the subsequence $n_{k^1}$ such that $F^{x}_0(\alpha)=\lim_{k^2\rightarrow\infty} F^{x}_{n_{k^2}}(\alpha)$. 
For any $\alpha\in \Theta$, since  $F^{x}_0, \forall x\in \mathcal{X}$, is continuous at $\alpha\in \Theta$, we can choose $p,q\in \mathcal{D}$, $\alpha\in (p,q)$  such that $F^{x}_0(q)-F^{x}_0(p)<\epsilon/2, \forall x\in \mathcal{X}$. 
Owning to the convergence on the countable set $\mathcal{D}$, we can pick $k^2$ sufficiently large such that $F^{x}_{n_{k^2}}(p)\in (F^{x}_0(p)-\epsilon/2,F^{x}_0(p)+\epsilon/2)$ and $F^{x}_{n_{k^2}}(q)\in (F^{x}_0(q)-\epsilon/2,F^x_0(q)+\epsilon/2)$ for all $x\in \mathcal{X}$.  
Then, \begin{equation*}
        F^{x}_{n_{k^2}}(\alpha)
        \leq F^{x}_{n_{k^2}}(q)<F^x_0(q)+\epsilon/2
        <F^{x}_0(p)+\epsilon
        \leq F^{x}_0(\alpha)+\epsilon, \forall x\in \mathcal{X}. 
    \end{equation*}
Analogously, we can also obtain     
$F^{x}_{n_{k^2}}(\alpha)> F^{x}_0(\alpha)-\epsilon , \forall x\in \mathcal{X}$, which together show the convergence at $\alpha$. 
Second, we show the convergence at discontinuous point $\beta\in \Theta$. 
Since $F^{\mathcal{X}}_0$ is element-wise non-decreasing, the set of discontinuity is at most countable based on Froda's theorem.
 Thus, Lemma \ref{lemma:countable} guarantees that we can  select a convergent subsequence $n_{k}$ from the subsequence $n_{k^2}$ such that  $F^{x}_0(\beta)=\lim_{k \rightarrow\infty} F^{x}_{n_{k}}(\beta), \forall x\in \mathcal{X}$. 
 Combining the above two cases, we have found a convergent subsequence $n_k$ over the entire set $\Theta$, i.e., $F^{x}_{0}(\theta)= \lim_{k\rightarrow \infty} F^{x}_{n_k}(\theta), \forall x\in \mathcal{X}, \forall \theta\in \Theta$. 
\end{proof}

Next, we extend Helly's second theorem to a production of sets in Theorem \ref{thm:second thm}. 
\begin{theorem}
\label{thm:second thm}
Let $u^{x,y}(\theta_1,\theta_2)$ be continuous over the joint set $[\alpha,\beta]\times \mathcal{C} $ for each action $x\in \mathcal{X},y\in \mathcal{Y}$, where $\mathcal{C}$ is compact and $[\alpha,\beta]\subseteq \Theta$, then for each $y\in \mathcal{Y}$, $\sum_{x\in \mathcal{X}} \int_{\alpha}^{\beta} u^{x,y}(\theta_1,\theta_2)dF^x_n(\theta_1)$ converges to $\sum_{x\in \mathcal{X}}  \int_{\alpha}^{\beta} u^{x,y}(\theta_1,\theta_2)dF^x_0(\theta_1)$ uniformly in $\theta_2$. 
\end{theorem}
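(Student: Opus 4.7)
The plan is to adapt the standard proof of Helly's second theorem in two respects: first, to accommodate the vector-valued strategies $F^{\mathcal{X}}$, and second, to upgrade pointwise-in-$\theta_2$ convergence to uniform convergence over $\mathcal{C}$. Since $\mathcal{X}$ is finite, I would reduce to establishing, for each fixed $x\in\mathcal{X}$ and $y\in\mathcal{Y}$, the uniform-in-$\theta_2$ convergence of $\int_\alpha^\beta u^{x,y}(\theta_1,\theta_2)\,dF_n^x(\theta_1)$ to $\int_\alpha^\beta u^{x,y}(\theta_1,\theta_2)\,dF_0^x(\theta_1)$, and then sum the $L$ terms. The two key ingredients that survive being sup'd over $\theta_2\in\mathcal{C}$ are (i) uniform continuity of $u^{x,y}$ on the compact product $[\alpha,\beta]\times\mathcal{C}$ and (ii) the uniform bound $M:=\sup_{[\alpha,\beta]\times\mathcal{C}}|u^{x,y}|<\infty$.

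Given $\varepsilon>0$, I would first use uniform continuity to pick $\delta>0$ so that $|u^{x,y}(\theta_1,\theta_2)-u^{x,y}(\theta_1',\theta_2)|<\varepsilon$ whenever $|\theta_1-\theta_1'|<\delta$ for all $\theta_2\in\mathcal{C}$ and all $x\in\mathcal{X}$. Then I would pick a partition $\alpha=t_0<t_1<\cdots<t_m=\beta$ of mesh less than $\delta$ whose interior nodes are common continuity points of every $F_0^x$, $x\in\mathcal{X}$; this is possible because each $F_0^x$ has at most countably many discontinuities by Froda's theorem, and a finite union of countable sets remains countable, so common continuity points are dense in $[\alpha,\beta]$. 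On each subinterval, monotonicity of $F_n^x$ combined with the near-constancy of $u^{x,y}(\cdot,\theta_2)$ yields the Riemann--Stieltjes estimate
\begin{equation*}
\Bigl|\int_{t_{i-1}}^{t_i} u^{x,y}(\theta_1,\theta_2)\,dF_n^x(\theta_1) - u^{x,y}(t_i,\theta_2)\bigl[F_n^x(t_i)-F_n^x(t_{i-1})\bigr]\Bigr| \le \varepsilon\bigl[F_n^x(t_i)-F_n^x(t_{i-1})\bigr],
\end{equation*}
and the analogous bound with $F_0^x$. Summing over $i$ gives total approximation error at most $\varepsilon[F_n^x(\beta)-F_n^x(\alpha)]\le\varepsilon$, uniformly in $\theta_2$ and in $n$.

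It then remains to control the discrepancy of the two discrete sums, which I would bound by
\begin{equation*}
\Bigl|\sum_{i=1}^m u^{x,y}(t_i,\theta_2)\bigl([F_n^x(t_i)-F_n^x(t_{i-1})]-[F_0^x(t_i)-F_0^x(t_{i-1})]\bigr)\Bigr| \le 2M\sum_{i=0}^m |F_n^x(t_i)-F_0^x(t_i)|.
\end{equation*}
Theorem~\ref{thm:first thm} supplies $F_n^x(t_i)\to F_0^x(t_i)$ at each of the finitely many partition points, so for $n$ sufficiently large the right-hand side falls below $\varepsilon$; since the prefactor $2M$ is independent of $\theta_2$, this control is uniform on $\mathcal{C}$. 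Combining the three $\varepsilon$-estimates via the triangle inequality and summing the finite family over $x\in\mathcal{X}$ yields the claimed uniform convergence.

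The main obstacle I anticipate is ensuring that a \emph{single} partition suffices simultaneously for all $x\in\mathcal{X}$ — this is precisely where the vector-valued extension of Helly is essential, and it hinges on the union-of-countable-sets observation. A secondary subtlety concerns the endpoints $\alpha,\beta$, which may themselves be jump points of some $F_0^x$; this is harmless because Theorem~\ref{thm:first thm} guarantees pointwise convergence at every point of $[0,1]$, not merely at continuity points, so the endpoint contributions in the telescoped sum are still controlled.
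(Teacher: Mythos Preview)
Your proposal is correct and follows essentially the same approach as the paper: both proofs approximate $u^{x,y}(\cdot,\theta_2)$ by a step function on a partition of mesh $<\delta$ chosen via uniform continuity on the compact product, bound the resulting Riemann--Stieltjes approximation error by $\varepsilon$ times the total variation of $F_n^x$ (uniformly in $\theta_2$), and then use pointwise convergence $F_n^x(t_i)\to F_0^x(t_i)$ at the finitely many partition nodes to control the discrete-sum discrepancy by a constant multiple of $M\varepsilon$. The only cosmetic differences are that the paper uses the infimum $u_d^{x,y}(\theta_2):=\min_{[\theta_1^d,\theta_1^{d+1}]}u^{x,y}(\theta_1,\theta_2)$ on each subinterval where you use the endpoint value $u^{x,y}(t_i,\theta_2)$, and the paper forces each node error below $\varepsilon/D$ to obtain an explicit $(2M+1)\varepsilon$ bound where you simply invoke convergence at finitely many points; neither alters the substance of the argument.
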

\begin{proof}
For any $\epsilon>0$, since  $u^{x,y}$ is uniformly continuous over the joint set, we can choose $\delta>0$ such that 
\begin{equation}
\label{eq:(2)}
    |u^{x,y}(\theta_1,\theta_2)-u^{x,y}(\theta'_1,\theta_2)|<\epsilon
\end{equation}
for all $x\in \mathcal{X},y\in \mathcal{Y}, \theta_2\in \mathcal{C}, |\theta_1-\theta_1'|<\delta$.  
Choose $\alpha=\theta_1^1<\theta_1^2<\cdots<\theta_1^D=\beta$ such that $F^x_n$ is continuous at each $\theta_1^d, d\in \{2,\cdots,D-1\}$ and $\theta_1^{d+1}-\theta_1^d<\delta$ for all $x\in \mathcal{X}$, which can be done as $F^x_n$ has at most countable discontinuities over the set $[\alpha,\beta]$. 
Define $
    u^{x,y}_d(\theta_2):=\min_{\theta_1^d\leq \theta_1\leq \theta_1^{d+1}} u^{x,y}(\theta_1,\theta_2)$
and 
$  S^{x,y}_n(\theta_2):=\sum_{d=1}^{D-1} \int_{\theta_1^d}^{\theta_1^{d+1}} u^{x,y}_d(\theta_2)dF^x_n(\theta_1), \forall n\in \mathbb{Z}^+_0. $
Then, we have
\begin{equation}
\label{eq:(5)}
    \int_{\alpha}^{\beta}u^{x,y}(\theta_1,\theta_2)d F^x_n(\theta_1) \geq  S^{x,y}_n(\theta_2). 
\end{equation}
Now by \eqref{eq:(2)} and the monotonicity of $F_n^x$, we have 
$
     S^{x,y}_n(\theta_2)\geq \int_{\alpha}^{\beta} u^{x,y}(\theta_1,\theta_2)d F^x_n(\theta_1)-\epsilon (F^x_n(\beta)-F^x_n(\alpha)), \forall x\in \mathcal{X},y\in \mathcal{Y}. 
$
Then, using \eqref{eq:(5)} and the fact that $F^x_n(\beta)-F^x_n(\alpha)\leq 1, \forall x\in \mathcal{X}$, 
we have 
\begin{equation}
\label{eq:(7)}
    | S^{x,y}_n(\theta_2)-\int_{\alpha}^{\beta}  u^{x,y}(\theta_1,\theta_2)d F^x_n(\theta_1)|<\epsilon , \forall x\in \mathcal{X},y\in \mathcal{Y}.
\end{equation}
Now, choose $N$ large enough such that, for each $d=1,\cdots,D$, and each $n\geq N$, 
\begin{equation}
\label{eq:(8)}
    |F_n^{x}(\theta_1^d)-F_0^{x}(\theta_1^d) |<\frac{\epsilon}{D}, \forall x\in \mathcal{X}. 
\end{equation}
Then, we obtain 
$
    \int_{\theta_1^d}^{\theta_1^{d+1}} u_d^{x,y}(\theta_2) dF_n^{x}(\theta_1) = u_d^{x,y}(\theta_2) (F_n^{x}(\theta_1^{d+1})-F_n^{x}(\theta_1^d)), \forall x\in \mathcal{X},y\in \mathcal{Y}, \forall n\in \mathbb{Z}_0^+, 
$
which, together with \eqref{eq:(8)}, gives us 
$
    | \int_{\theta_1^d}^{\theta_1^{d+1}} u_d^{x,y}(\theta_2) dF_n^{x}(\theta_1)- \int_{\theta_1^d}^{\theta_1^{d+1}} u_d^{x,y}(\theta_2) dF_0^{x}(\theta_1) | < \frac{2\epsilon}{D} |u_d^{x,y}(\theta_2)|, \allowbreak
    \forall x\in \mathcal{X},y\in \mathcal{Y}.  
$
Since $u^{x,y}$ is continuous over a compact set, there exists a finite upper bound $M$ for $|u^{x,y}_d(\theta_2)|$. 
Therefore, 
\begin{equation}
\label{eq:(11)}
    | S^{x,y}_n(\theta_2)-S^{x,y}_0(\theta_2) | <\sum_{d=1}^{D-1} \frac{2\epsilon}{D} |u_d^{x,y}(\theta_2)| \leq 2\epsilon M,
\end{equation}
Combine \eqref{eq:(7)} and \eqref{eq:(11)}, we have that $\forall  \theta_2\in \mathcal{C}, n\geq N$, 
\begin{equation*}
    | \int_{\alpha}^{\beta}u^{x,y}(\theta_1,\theta_2)d F^x_n(\theta_1) - \int_{\alpha}^{\beta}u^{x,y}(\theta_1,\theta_2)d F^x_0(\theta_1)| < (2M+1) \epsilon , \forall x\in \mathcal{X},y\in \mathcal{Y},   
\end{equation*}
or equivalently, 
\begin{equation*}
   \sum_{x\in \mathcal{X}} | \int_{\alpha}^{\beta}u^{x,y}(\theta_1,\theta_2)d F^x_n(\theta_1) - \int_{\alpha}^{\beta}u^{x,y}(\theta_1,\theta_2)d F^x_0(\theta_1)| < |\mathcal{X}|\cdot (2M+1)   \epsilon,\forall y\in \mathcal{Y}.  
\end{equation*}
Since $\epsilon$ is arbitrary and its coefficient $|\mathcal{X}|\cdot (2M+1) $ is fixed for all $\theta_2\in \mathcal{C}$, the convergence is uniformly in $\theta_2$ for each $y\in \mathcal{Y}$. 
\end{proof}

\section{Discretization and Convergence}
\label{sec:owen}
In this section, we provide a theoretical guarantee to approximate infinite Bayesian games by properly discretizing the type space and solving the resulted finite Bayesian games.  
The convergence of the BNE is guaranteed as long as the maximum distance of intervals under the discretization scheme goes to zero when the number of intervals goes to infinity. 
For simplicity, we adopt the following uniform discretization scheme. 
We can also adopt other deterministic schemes such as dichotomy or stochastic schemes such as sampling. 

For any integer $n\geq 1$ and action pair $x\in \mathcal{X}, y\in \mathcal{Y}$, define the level-$n$ approximation of two players' utility functions $u^{x,y},v^{x,y}$ as two $n\times n$ matrices $[u^{x,y,n}_{i,j}]_{i,j\in \{1,\cdots,n\}}, [v^{x,y,n}_{i,j}]_{i,j\in \{1,\cdots,n\}}$, respectively, where the $(i,j)$ elements are 
\begin{equation}
\label{eq:discretizaUtility}
    u^{x,y,n}_{i,j}=u^{x,y}(\frac{i}{n},\frac{j}{n}),  v^{x,y,n}_{i,j}=v^{x,y}(\frac{i}{n},\frac{j}{n}). 
\end{equation}
Then, the level-$n$ discretized version of the infinite Bayesian game $\Gamma$ is denoted as 
\begin{equation*}
    \Gamma^n = <\mathcal{X},\mathcal{Y},\bar{\Theta}_1^n\times \bar{\Theta}_2^n  ,b^n(\cdot), \{{u}^{x,y,n}_{i,j},{v}^{x,y,n}_{i,j}\}^{i,j\in \{1,\cdots,n\}}_{x\in\mathcal{X},y\in \mathcal{Y}}>, 
\end{equation*}
where the finite type set $\bar{\Theta}_i^n:=\{\frac{1}{n},\cdots,\frac{n}{n} \}$ contains $n$ discrete types of player $i$. 
Since we have assimilated the the prior type distribution $b(\cdot)$ into the players' utility functions $u^{x,y},v^{x,y}$, the prior distribution of the discrete types is $b^n(\frac{i}{n},\frac{j}{n})=\frac{1}{n^2}, \forall i,j\in \{1,\cdots,n\}$. 
Let $s^{\mathcal{X},n}:=(s^{\mathcal{X},n}_1,\cdots,s^{\mathcal{X},n}_n)$ and $t^{\mathcal{Y},n}:=(t^{\mathcal{Y},n}_1,\cdots,t^{\mathcal{Y},n}_n)$ be a BNE of the level-$n$ discretized Bayesian game $\Gamma^n$ where the elements of $s_i^{\mathcal{X},n}:=[s_i^{x_1,n},s_i^{x_2,n},\cdots,s_i^{x_L,n}]$ and $t_j^{\mathcal{Y},n}:=[t_i^{y_1,n},t_i^{y_2,n},\cdots,t_i^{y_H,n}]$ are all non-negative for all $i,j\in \{1,\cdots,n\}$ and each sum up to be $1$, i.e., $\sum_{l=1}^L s_i^{x_l,n}=1$, $\sum_{h=1}^H t_i^{y_h,n}=1, \forall i,j\in \{1,\cdots,n\}$.  
The existence of behavioral strategy pairs $(s^{\mathcal{X},n},t^{\mathcal{Y},n})$ is guaranteed \cite{shoham_leyton-brown_2008} for any finite Bayesian games $\Gamma^n$. 
For any $x\in \mathcal{X}, y\in \mathcal{Y}$ and  $n\in \mathcal{Z}^+$, define the non-decreasing right-continuous step functions
\begin{equation}
\label{eq: FnGn}
    F^x_n(\theta_1)=\frac{1}{n} \sum_{i=1}^{\lfloor n\theta_1 \rfloor} s_i^{x,n},  \  
    G^y_n(\theta_2)=\frac{1}{n} \sum_{j=1}^{\lfloor n\theta_2 \rfloor} t_j^{y,n}, 
\end{equation}
where $\lfloor n\theta_1 \rfloor$ represents the great integer that is not greater than the value of $n\theta_1$. 
Obviously, $F^{\mathcal{X}}_n\in \mathcal{F}^{\mathcal{X}}$ and $G^{\mathcal{Y}}_n\in \mathcal{G}^{\mathcal{Y}}$ for any $n\in \mathcal{Z}^+$. 

Since player $2$ has $H$ possible actions, we can divide the entire type space into at most $H$ disjoint subsets, i.e.,  $\Theta_2=\cup_{h=1}^H \Theta_2^h, \Theta_2^h\cap \Theta_2^{h'}=\emptyset, \forall h\neq h'$, where  player $2$ chooses to take action $y^h\in \mathcal{Y}$ when his type $\theta_2$ belongs to $\Theta_2^h$, i.e., $g^{y_h}(\theta_2)=\mathbf{1}_{\{\theta_2\in \Theta_2^h \}}, \forall h\in \{1,2,...,H\}, \forall \theta_2\in \Theta_2$. 
Note that each subset $\Theta_2^h\subseteq \Theta_2,h\in \{1,\cdots,H\}$, does not need to be connected and can be empty. 

\begin{lemma}
\label{lemma:continuous}
The function $\sum_{h\in \{1,\cdots,H\}} \int_{\Theta_2^h} {v}^{x,y_h}(\theta_1,\theta_2)  d\theta_2 $ is continuous over $\theta_1$ for any $x\in \mathcal{X}$. 
\end{lemma}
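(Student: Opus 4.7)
The plan is to prove continuity via a direct $\epsilon$--$\delta$ estimate that exploits the uniform continuity of each integrand on the compact square $[0,1]^2$, together with the fact that the partition $\{\Theta_2^h\}_{h=1}^H$ consists of disjoint subsets of $[0,1]$ whose measures therefore sum to at most $1$. The argument will actually yield uniform continuity, which is stronger than what the lemma claims.

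First I would fix $x \in \mathcal{X}$ and denote $\varphi(\theta_1) := \sum_{h=1}^{H} \int_{\Theta_2^h} v^{x,y_h}(\theta_1,\theta_2)\, d\theta_2$. Recall from Section~\ref{sec:reformulation} that $v^{x,y_h}(\theta_1,\theta_2) = b(\theta_1,\theta_2)\bar{v}^{x,y_h}(\theta_1,\theta_2)$ is continuous on $\Theta_1\times\Theta_2 = [0,1]^2$, and hence uniformly continuous and bounded there. Given $\epsilon>0$, for each $h\in\{1,\ldots,H\}$ choose $\delta_h>0$ so that $|v^{x,y_h}(\theta_1,\theta_2)-v^{x,y_h}(\theta_1',\theta_2)|<\epsilon$ whenever $|\theta_1-\theta_1'|<\delta_h$; since $H$ is finite, $\delta:=\min_h \delta_h>0$.

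Next I would apply the triangle inequality and monotonicity of the Lebesgue integral to obtain, for $|\theta_1-\theta_1'|<\delta$,
\begin{equation*}
|\varphi(\theta_1)-\varphi(\theta_1')| \le \sum_{h=1}^{H} \int_{\Theta_2^h} \bigl|v^{x,y_h}(\theta_1,\theta_2)-v^{x,y_h}(\theta_1',\theta_2)\bigr|\, d\theta_2 \le \epsilon \sum_{h=1}^{H} \mu(\Theta_2^h) \le \epsilon,
\end{equation*}
where $\mu$ denotes Lebesgue measure and the last inequality uses that the $\Theta_2^h$ are pairwise disjoint subsets of $[0,1]$. Since $\epsilon>0$ was arbitrary and $\delta$ does not depend on $\theta_1$, this establishes (uniform) continuity of $\varphi$ on $\Theta_1$.

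The main obstacle is really a small measurability caveat rather than an analytical one: one must know that each $\Theta_2^h$ is Lebesgue measurable, so that $\int_{\Theta_2^h}(\cdot)\,d\theta_2$ is well defined. This is immediate in the present setting, since the partition arises from $g^{y_h}(\theta_2)=\mathbf{1}_{\{\theta_2\in\Theta_2^h\}}$ and $g^{y_h}$ is a (measurable) strategy component, so $\Theta_2^h = (g^{y_h})^{-1}(\{1\})$ is measurable. Once this is noted, the uniform-continuity bound above closes the proof without any further technical work.
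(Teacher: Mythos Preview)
Your proof is correct and follows essentially the same approach as the paper's own proof: both exploit the uniform continuity of $v^{x,y_h}$ on the compact square to bound $|v^{x,y_h}(\theta_1,\theta_2)-v^{x,y_h}(\theta_1',\theta_2)|<\epsilon$ uniformly in $\theta_2$, then integrate over the disjoint sets $\Theta_2^h$ and use that their total measure is at most $1$. Your version is slightly more polished (you state uniform continuity explicitly, take a minimum over $h$, and address measurability of $\Theta_2^h$), but the argument is the same.
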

\begin{proof}
Since ${v}^{x,y_h}(\theta_1,\theta_2)$ is continuous over the joint type space for any $x\in \mathcal{X}, y\in \mathcal{Y}$, we know that for any number $\epsilon > 0$, however small, there exists some number $\delta > 0$ such that for all $\theta_1\in (\alpha-\delta,\alpha+\delta)$,  ${v}^{x,y_h}(\theta_1,\theta_2) \in ({v}^{x,y_h}(\alpha,\theta_2)-\epsilon, {v}^{x,y_h}(\alpha,\theta_2)+\epsilon )$ for all $\theta_2\in \Theta_2$. 
Based on the fact that $ \sum_{h\in \{1,\cdots,H\}} \int_{\Theta_2^h}  d\theta_2 \equiv 1$, we have
$
    \sum_{h\in \{1,\cdots,H\}} \int_{\Theta_2^h}  {v}^{x,y_h}(\alpha,\theta_2)  d\theta_2  -\epsilon
    <\sum_{h\in \{1,\cdots,H\}} \int_{\Theta_2^h} {v}^{x,y_h}(\theta_1,\theta_2)  d\theta_2
    <\sum_{h\in \{1,\cdots,H\}} \int_{\Theta_2^h} {v}^{x,y_h}(\alpha,\theta_2)  d\theta_2 +\epsilon, 
$
which proves the continuity in $\theta_1$. 
\end{proof}

Now, we are ready to prove our main result of equilibrium convergence in Theorem \ref{thm:main}. 
\begin{theorem}[Convergence of BNE by Discretization]
\label{thm:main}
A infinite Bayesian game $\Gamma$ has at least one BNE pair  $(F_0^{\mathcal{X}}\in \mathcal{F}^{\mathcal{X}},G_0^{\mathcal{Y}}\in \mathcal{G}^{\mathcal{Y}})$ in behavioral strategies. 
Moreover, there exists a sequence of discretized Bayesian games $\{\Gamma^{n_k}\}_{k\in \mathbb{Z}^+}$ such that $F_0^{x}(\theta_1)=\lim_{k\rightarrow \infty} F_{n_k}^{x}(\theta_1), \forall \theta_1\in \Theta, \forall x\in \mathcal{X}$ and $G_0^{y}(\theta_2)=\lim_{k\rightarrow \infty} G_{n_k}^{y}(\theta_2), \forall \theta_2\in \Theta, \forall y\in \mathcal{Y}$. 
\end{theorem}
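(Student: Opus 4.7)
The plan is to realize a BNE of $\Gamma$ as a limit of BNE of the discretized games $\Gamma^{n}$, in the spirit of Owen (1976). For each $n$, the finite Bayesian game $\Gamma^{n}$ admits a behavioral-strategy BNE $(s^{\mathcal{X},n},t^{\mathcal{Y},n})$ by standard finite-game existence. Encoded in distributional form through equation~\eqref{eq: FnGn}, this produces $F_n^{\mathcal{X}}\in\mathcal{F}^{\mathcal{X}}$ and $G_n^{\mathcal{Y}}\in\mathcal{G}^{\mathcal{Y}}$. Applying Theorem~\ref{thm:first thm} to $\{F_n^{\mathcal{X}}\}$ extracts a pointwise convergent subsequence $F_{n_k}^{\mathcal{X}}\to F_0^{\mathcal{X}}\in\mathcal{F}^{\mathcal{X}}$, and applying it again along $\{n_k\}$ to $\{G_{n_k}^{\mathcal{Y}}\}$ yields a further subsequence (still denoted $n_k$) with $G_{n_k}^{\mathcal{Y}}\to G_0^{\mathcal{Y}}\in\mathcal{G}^{\mathcal{Y}}$. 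Both the existence and the convergence claims will follow once $(F_0^{\mathcal{X}},G_0^{\mathcal{Y}})$ is shown to be a BNE of the continuous game.

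To verify the equilibrium property I argue by contradiction. If $(F_0^{\mathcal{X}},G_0^{\mathcal{Y}})$ is not a BNE, Lemma~\ref{lemma:BNE=NE} produces a strictly profitable deviation for some player, say player~$2$, and Lemma~\ref{lemma:purepolicy} lets me take it to be a pure policy $\hat G^{\mathcal{Y}}\in\mathcal{G}^{\mathcal{Y}}$ associated with a measurable partition $\{\hat\Theta_2^h\}_{h=1}^{H}$ that improves player~$2$'s payoff against $F_0^{\mathcal{X}}$ by some margin $3\delta>0$. I then discretize $\hat G^{\mathcal{Y}}$ by snapping the partition to the grid $\{j/n_k\}$, obtaining $\hat G_{n_k}^{\mathcal{Y}}\in\mathcal{G}^{\mathcal{Y}}$ with $\hat G_{n_k}^{y}\to\hat G^{y}$ pointwise. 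The BNE property of $\Gamma^{n_k}$ in distributional form then gives
\[
\sum_{x,y}\int_0^1\!\!\int_0^1 v^{x,y}\,dF_{n_k}^x\,dG_{n_k}^y
\;\geq\;
\sum_{x,y}\int_0^1\!\!\int_0^1 v^{x,y}\,dF_{n_k}^x\,d\hat G_{n_k}^y,
\]
and the goal is to pass to the limit $k\to\infty$ on both sides to derive a reverse inequality incompatible with the $3\delta$ gap.

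The engine of the limit passage is Theorem~\ref{thm:second thm}: for each fixed $y$, the map $\psi_{n_k}^y(\theta_2):=\sum_x\int_0^1 v^{x,y}(\theta_1,\theta_2)\,dF_{n_k}^x(\theta_1)$ converges uniformly in $\theta_2$ to $\psi_0^y(\theta_2):=\sum_x\int_0^1 v^{x,y}(\theta_1,\theta_2)\,dF_0^x(\theta_1)$, and uniform continuity of $v^{x,y}$ on the compact joint type set, together with the mass bound $F_0^x(1)\leq1$, makes $\psi_0^y$ continuous in $\theta_2$. Splitting
\[
\left|\int\psi_{n_k}^y\,dG_{n_k}^y-\int\psi_0^y\,dG_0^y\right|
\;\leq\;
\|\psi_{n_k}^y-\psi_0^y\|_\infty
+\left|\int\psi_0^y\,dG_{n_k}^y-\int\psi_0^y\,dG_0^y\right|
\]
reduces the remaining work to classical Helly-type integral convergence of a continuous integrand against monotone CDFs converging pointwise, and a parallel computation handles $\hat G_{n_k}^y$ in place of $G_{n_k}^y$. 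Symmetry then covers player~$1$, completing the contradiction.

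The principal obstacle is the simultaneous control of three sources of error in the limit: the grid-rounding mismatch between $\hat G_{n_k}^{\mathcal{Y}}$ and $\hat G^{\mathcal{Y}}$, the weak-convergence gap from the (at most countable) discontinuities of $G_0^y$ and $\hat G^y$, and the replacement of $F_{n_k}^x$ by $F_0^x$ inside the inner integral. The rounding error is $O(1/n_k)$ by uniform continuity of $v^{x,y}$ and the unit mass bound on $dF_{n_k}^x$; the discontinuity issue is absorbed by the continuity of $\psi_0^y$ in $\theta_2$ (in the spirit of Lemma~\ref{lemma:continuous}); and the replacement step is exactly Theorem~\ref{thm:second thm}. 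Showing that all three errors can be driven below $\delta$ simultaneously for all sufficiently large $k$ is the crux of the proof.
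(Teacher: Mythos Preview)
Your proposal is correct and follows essentially the same route as the paper: extract a convergent subsequence via Theorem~\ref{thm:first thm}, assume the limit pair is not a BNE, produce a pure profitable deviation for (say) player~2 via Lemmas~\ref{lemma:purepolicy} and~\ref{lemma:BNE=NE}, discretize that deviation onto the grid, and use Theorem~\ref{thm:second thm} together with Lemma~\ref{lemma:continuous}-type continuity to force a contradiction with the BNE property of $\Gamma^{n_k}$. The only difference is organizational: you hold the finite-game BNE inequality fixed and pass both sides to the limit via your $\psi^y$-splitting, whereas the paper works the other direction, first bounding $w_{n_k}<w_0+2\epsilon$ and then showing the discretized deviation beats $w_{n_k}$ in $\Gamma^{n_k}$; the two arrangements use exactly the same ingredients and the same ``Riemann-sum'' discretization of the deviation.
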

\begin{proof}
We prove the theorem by contradiction. 
According to Theorem \ref{thm:first thm}, the sequence of mixed strategy pairs $(F^{ \mathcal{X}}_n\in \mathcal{F}^{\mathcal{X}}, G^{ \mathcal{Y}}_n\in \mathcal{G}^{\mathcal{Y}})$ will have a subsequence $(F^{ \mathcal{X}}_{n_k}, G^{ \mathcal{Y}}_{n_k})$ that converges weakly to a pair of strategies $(F^{ \mathcal{X}}_0\in \mathcal{F}^{\mathcal{X}}, G^{ \mathcal{Y}}_0\in \mathcal{G}^{\mathcal{Y}})$. 
Suppose the strategy pair $(F^{ \mathcal{X}}_0, G^{ \mathcal{Y}}_0)$ does not consist a BNE.  
Then, at least one of the two strategies is not a best response against the other. We may assume that $ G^{ \mathcal{Y}}_0$ is not optimal againt $F^{ \mathcal{X}}_0$. 
The second player's expected utility under the BNE of $\Gamma$ is
\begin{equation}
    w_0:=\sum_{x\in\mathcal{X}} \sum_{y\in\mathcal{Y}}     \int_0^1   \int_0^1  {v}^{x,y}(\theta_1,\theta_2)  dF_0^x(\theta_1) dG_0^y(\theta_2). 
\end{equation}
For each $n$, the second player's expected utility under the BNE of $\Gamma^n$ is
\begin{equation}
\begin{split}
        w_n & := \frac{1}{n^2} \sum_{x\in\mathcal{X}} \sum_{y\in\mathcal{Y}}  \sum_{i=1}^n \sum_{j=1}^n v^{x,y,n}_{i,j} s_i^{x,n} t_j^{y,n}  =\sum_{x\in\mathcal{X}} \sum_{y\in\mathcal{Y}}     \int_0^1   \int_0^1  {v}^{x,y}(\theta_1,\theta_2)  dF_n^x(\theta_1) dG_n^y(\theta_2). 
\end{split}
\end{equation}

Since $G_0^{\mathcal{Y}}$ is not an optimal response against $F_0^{\mathcal{X}}$ and Lemma \ref{lemma:purepolicy} shows that the deviation can be a pure strategy without loss of generality, there exists a set division of $\Theta_2$, i.e., $\Theta_2^h, \forall h\in \{1,2,...,H\} $, such that the deviation strategy $\bar{g}^{y_h}(\theta_2)=\mathbf{1}_{\{\theta_2\in \Theta_2^h \}}, \forall h\in \{1,2,...,H\}, \forall \theta_2\in \Theta_2$, achieves an expected utility larger than $w_0$. 
Then, there exists $\epsilon>0$ such that
\begin{equation*}
     \sum_{h\in \{1,\cdots,H\}} \int_{\Theta_2^h}    \sum_{x\in\mathcal{X}}     \int_0^1  {v}^{x,y_h}(\theta_1,\theta_2)  dF_0^x(\theta_1) d\theta_2 \geq w_0+4\epsilon. 
\end{equation*}

Based on the  continuity result in Lemma \ref{lemma:continuous} and the convergence result in Theorem \ref{thm:second thm}, for the set division $\{\Theta_2^h\}_{h\in \{1,\cdots,H\}}$, there exists $K_1$ such that if $k\geq K_1$, we have
\begin{equation*}
\begin{split}
        &\sum_{x\in\mathcal{X}} \int_0^1 \bigg[  \sum_{h\in \{1,\cdots,H\}} \int_{\Theta_2^h} {v}^{x,y_h}(\theta_1,\theta_2)  d\theta_2  \bigg]  dF_{n_k}^x(\theta_1)  \\
    > &  \sum_{x\in\mathcal{X}} \int_0^1 \bigg[  \sum_{h\in \{1,\cdots,H\}} \int_{\Theta_2^h} {v}^{x,y_h}(\theta_1,\theta_2)  d\theta_2  \bigg]  dF_0^x(\theta_1) -\epsilon,  
\end{split}
\end{equation*}
or equivalently, 
\begin{equation}
\label{eq:21}
    \frac{1}{n_k} \sum_{x\in\mathcal{X}}\sum_{i=1}^{n_k}  \bigg[  \sum_{h\in \{1,\cdots,H\}} \int_{\Theta_2^h} {v}^{x,y_h}( \frac{i}{n_k},\theta_2)  d\theta_2  \bigg] s_i^{x,n_k} > w_0+3\epsilon. 
\end{equation}
Theorem \ref{thm:second thm} also guarantees that there exists $K_2$ such that if $k\geq K_2$,
\begin{equation*}
   \begin{split}
        &\sum_{x\in\mathcal{X}}    \int_0^1  {v}^{x,y}(\theta_1,\theta_2)  dF_{n_k}^x(\theta_1) 
    <\sum_{x\in\mathcal{X}}  \int_0^1  {v}^{x,y}(\theta_1,\theta_2)  dF_0^x(\theta_1) 
    + \epsilon , \forall \theta_2\in  [0,1], \forall y\in \mathcal{Y},\\
    &\sum_{y\in\mathcal{Y}}    \int_0^1  {v}^{x,y}(\theta_1,\theta_2)  dG_{n_k}^x(\theta_2) 
    <    \sum_{y\in\mathcal{Y}}    \int_0^1  {v}^{x,y}(\theta_1,\theta_2)  dG_{0}^x(\theta_2) 
    + \epsilon , \forall \theta_1\in  [0,1], \forall x\in \mathcal{X}.
   \end{split}
\end{equation*}
Thus, we obtain $w_{n_k}<w_0+2\epsilon$ and from \eqref{eq:21}, we have
\begin{equation}
\label{eq:(28)}
        \frac{1}{n_k} \sum_{x\in\mathcal{X}}\sum_{i=1}^{n_k}  \bigg[  \sum_{h\in \{1,\cdots,H\}} \int_{\Theta_2^h} {v}^{x,y_h,n_k}( \frac{i}{n_k},\theta_2)  d\theta_2  \bigg] s_i^{x,n_k} > w_{n_k}+\epsilon.
\end{equation}

Owning to the continuity of  ${v}^{x,y_h}(\frac{i}{n_k},\theta_2)$ over $\theta_2$, $\int_{\Theta_2^h} {v}^{x,y_h,n_k}( \frac{i}{n_k},\theta_2)  d\theta_2$ is Riemann integrable. 
Since we discretize the entire type set $\Theta_2$ uniformly,  the length of the sub-interval of the partition is $\frac{1}{n_k} $. 
Thus, there exists $K_3$ such that if $k\geq K_3$, 
\begin{equation*}
     \frac{1}{n_k}  \sum_{h\in \{1,\cdots,H\}} \sum_{\frac{j}{n_k}\in \Theta_2^h} {v}^{x,y_h,n_k}( \frac{i}{n_k},\frac{j}{n_k})  d\theta_2  
     >  \sum_{h\in \{1,\cdots,H\}} \int_{\Theta_2^h} {v}^{x,y_h,n_k}( \frac{i}{n_k},\theta_2)  d\theta_2  -\epsilon, 
\end{equation*}
and so 
\begin{equation*}
     \begin{split}
         & \frac{1}{n_k}  \sum_{x\in\mathcal{X}}\sum_{i=1}^{n_k}  \bigg[ \sum_{h\in \{1,\cdots,H\}} \sum_{\frac{j}{n_k}\in \Theta_2^h} {v}^{x,y_h,n_k}( \frac{i}{n_k},\frac{j}{n_k})  \bigg] s_i^{x,n_k}\\
     > &   \sum_{x\in\mathcal{X}}\sum_{i=1}^{n_k}  \bigg[  \sum_{h\in \{1,\cdots,H\}} \int_{\Theta_2^h} {v}^{x,y_h,n_k}( \frac{i}{n_k},\theta_2)  d\theta_2 \bigg] s_i^{x,n_k} -n_k \cdot \epsilon. 
     \end{split}
\end{equation*}
Finally, combine with \eqref{eq:(28)}, we know that 
\begin{equation*}
     \frac{1}{(n_k)^2}   \sum_{x\in\mathcal{X}}\sum_{i=1}^{n_k}  \bigg[ \sum_{h\in \{1,\cdots,H\}} \sum_{\frac{j}{n_k}\in \Theta_2^h} {v}^{x,y_h,n_k}( \frac{i}{n_k},\frac{j}{n_k})  \bigg] s_i^{x,n_k} 
     > w_{n_k}, 
\end{equation*}
which leads to a contradiction as $t^{\mathcal{Y},n_k}$ was assumed to be an optimal response against $s^{\mathcal{X}, n_k}$ in the finite Bayesian game $\Gamma^{n_k}$. 
However, the second player achieves a higher expected utility under $s^{\mathcal{X}, n_k}$ if he adopts the pure BNE strategy $\bar{t}^{\mathcal{Y},n_k}$ whose $i$-th element $\bar{t}_i^{\mathcal{Y},n_k}$ satisfies $\bar{t}_i^{y_{h'},n_k}=\mathbf{1}_{\{h'=h\}}, \forall h'\in \{1,\cdots,H\}$,  if $\frac{i}{n_k}\in \Theta_2^h$. 
Therefore, the contradiction leads to the conclusion that $G^{\mathcal{Y}}_0$ is always optimal against $F^{\mathcal{X}}_0$ and the strategy pair ($F^{\mathcal{X}}_0$,$G^{\mathcal{Y}}_0$) consists a BNE in behavioral strategy for the infinite Bayesian game $\Gamma$. 
\end{proof}

\subsection{Algorithm to Compute $\varepsilon$-BNE of Infinite Bayesian Games}
Although Theorem \ref{thm:main} proves the asymptotic convergence of BNE, there is no finite-step performance guarantee. 
There exist counterexamples (see e.g., \cite{reeves2004computing}) where the finite approximation of an infinite game leads to misleading results. 
Due to the pathology, we construct Algorithm \ref{algorithm:espBNE} as follows to check whether a $\varepsilon$-BNE has been reached at some finite level $n$. 

\begin{algorithm}[h]
\SetAlgoLined
\textbf{Input} the infinite Bayesian game $\Gamma$, 
the approximation accuracy $\varepsilon>0$, and the maximum number of discretization $K$\;  
 Initialize the discretization level $n=1$\;
 \While{$n<K$}{
 Discretize $\Gamma$ via \eqref{eq:discretizaUtility} to obtain $\Gamma^n$\;
 Solve $\Gamma^n$ to obtain the equilibrium strategy pair $(s^{\mathcal{X},n},t^{\mathcal{Y},n})$\;
 Obtain the level-$n$ approximated strategy pair $(F_n^{\mathcal{X}},G_n^{\mathcal{Y}})$ for $\Gamma$ via \eqref{eq: FnGn}\;
  \uIf{   $(F_n^{\mathcal{X}},G_n^{\mathcal{Y}})$ consists a $\varepsilon$-BNE of $\Gamma$ in Definition \ref{def:epsBNE}   }{
\textbf{Terminate}\;
  }
   $n:=n+1$\;
}
  \textbf{Output}  the $\varepsilon$-BNE strategy $(F_n^{\mathcal{X}},G_n^{\mathcal{Y}})$ of the infinite Bayesian game $\Gamma$. 
 \caption{Compute $\varepsilon$-BNE of infinite Bayesian game $\Gamma$ \label{algorithm:espBNE}}
\end{algorithm}

To compute the BNE of finite Bayesian games in line $5$, we can construct the following  bilinear program ${C}^K$ (see Theorem 1 of \cite{huang2020dynamic}). Recall that the finite type set $\bar{\Theta}^n_i\subset \Theta$ contains the $n$ discrete types of player $i$. 
\begin{equation}
\begin{split}
 {[{C}^K]}:  & \max_{\sigma_{1},\sigma_{2},s_1,s_2}  \  
 \sum_{\theta_1\in \bar{\Theta}^n_1} \alpha_1(\theta_1)s_1(\theta_1)  +  \sum_{\theta_1\in \bar{\Theta}^n_1} 
 \alpha_1(\theta_1)\mathbb{E}_{\theta_{2}\sim b_1(\cdot|\theta_1), x\sim \sigma_1,y \sim \sigma_{2} } \allowbreak [ \bar{u}^{x,y}(\theta_1,\theta_{2})]
\\
&\quad +\sum_{\theta_2 \in \bar{\Theta}^n_2} \alpha_2(\theta_2)s_2(\theta_2)+ \sum_{\theta_2 \in \bar{\Theta}^n_2 } 
 \alpha_2(\theta_2)  \mathbb{E}_{\theta_{1}\sim b_2(\cdot|\theta_2), x\sim \sigma_1,y \sim \sigma_{2} } \allowbreak [ \bar{v}^{x,y}(\theta_1,\theta_{2})]
 \\
\text {s.t.}  \quad \quad   &(a)  \quad  
 \mathbb{E}_{\theta_{1}\sim b_2(\cdot|\theta_2), x\sim \sigma_1 } \allowbreak [ \bar{v}^{x,y}(\theta_1,\theta_{2})]  \leq -s_2(\theta_2) , \forall \theta_2\in \bar{\Theta}^n_2, \forall y\in \mathcal{Y}, \\
 & (b) \quad 
\sum_{x\in \mathcal{X}} \sigma_1(x|\theta_1)=1, \sigma_1(x|\theta_1)\geq 0,  \forall \theta_1\in \bar{\Theta}^n_1, \\
&(c)  \quad  
\mathbb{E}_{\theta_{2}\sim b_1(\cdot|\theta_1),y \sim \sigma_{2} } \allowbreak [ \bar{u}^{x,y}(\theta_1,\theta_{2})] \leq -s_1(\theta_1), \ \forall \theta_1\in \bar{\Theta}^n_1, \forall x\in \mathcal{X}, \\
& (d) \quad 
\sum_{y\in \mathcal{Y}} \sigma_2(y|x,\theta_2)=1, \sigma_2(y|\theta_2)\geq 0,  \forall \theta_2\in \bar{\Theta}^n_2.
\label{eq: constrained optimization}
\end{split}
\end{equation}
Note that  $\alpha_1(\theta_1), \forall \theta_1\in \bar{\Theta}^n_1$ and $\alpha_2(\theta_2), \forall \theta_2\in \bar{\Theta}^n_2$, are not decision variables and can be any strictly positive and finite numbers. 
Thus, we have the freedom to pick them properly to obtain a linear program rather than a bilinear program under certain conditions as shown in Proposition \ref{prop:LP}. 
\begin{proposition}[Linear Program Reformulation]
\label{prop:LP}
If there exists $m_i(\theta_i)>0, \forall i\in \{1,2\}, \forall \theta_i\in \bar{\Theta}^n_i$, such that
$m_2(\theta_2) \bar{u}^{x,y}(\theta_1,\theta_{2})=-m_1(\theta_1) \bar{v}^{x,y}(\theta_1,\theta_{2})$ holds for all $ x\in \mathcal{X},y\in \mathcal{Y},\theta_1\in \bar{\Theta}^n_1,\theta_2\in \bar{\Theta}^n_2$, then we can pick $\alpha_i(\theta_i)=\bar{b}_i(\theta_i)/ m_i(\theta_i)>0$ to make 
$C^K$ a linear program. 
\end{proposition}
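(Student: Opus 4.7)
The plan is to verify that, under the stated proportionality condition, the bilinear cross terms in the objective of $C^K$ collapse, leaving a fully linear program; the constraints (a)--(d) are already linear in the decision variables, so only the objective needs attention. First I would inspect $C^K$ and isolate the two terms that involve the bilinear product $\sigma_1(x|\theta_1)\sigma_2(y|\theta_2)$, namely
\begin{equation*}
T_1 := \sum_{\theta_1\in\bar{\Theta}_1^n} \alpha_1(\theta_1)\,\mathbb{E}_{\theta_2\sim b_1(\cdot|\theta_1),x\sim\sigma_1,y\sim\sigma_2}[\bar u^{x,y}(\theta_1,\theta_2)],
\end{equation*}
\begin{equation*}
T_2 := \sum_{\theta_2\in\bar{\Theta}_2^n} \alpha_2(\theta_2)\,\mathbb{E}_{\theta_1\sim b_2(\cdot|\theta_2),x\sim\sigma_1,y\sim\sigma_2}[\bar v^{x,y}(\theta_1,\theta_2)],
\end{equation*}
and note that the remaining pieces of the objective, $\sum_{\theta_1}\alpha_1(\theta_1)s_1(\theta_1)$ and $\sum_{\theta_2}\alpha_2(\theta_2)s_2(\theta_2)$, are already linear in the decision variables $(s_1,s_2)$ whenever the weights $\alpha_i(\theta_i)$ are constants.

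Second, I would substitute the prescribed weights $\alpha_i(\theta_i)=\bar{b}_i(\theta_i)/m_i(\theta_i)$ and use the relation $b_i(\theta_j|\theta_i)\bar{b}_i(\theta_i)=b(\theta_1,\theta_2)$ to rewrite both $T_1$ and $T_2$ as sums over the joint type grid weighted by the joint prior $b(\theta_1,\theta_2)$. Concretely,
\begin{equation*}
T_1 = \sum_{\theta_1,\theta_2}\sum_{x,y}\frac{b(\theta_1,\theta_2)}{m_1(\theta_1)}\,\bar u^{x,y}(\theta_1,\theta_2)\,\sigma_1(x|\theta_1)\sigma_2(y|\theta_2),
\end{equation*}
\begin{equation*}
T_2 = \sum_{\theta_1,\theta_2}\sum_{x,y}\frac{b(\theta_1,\theta_2)}{m_2(\theta_2)}\,\bar v^{x,y}(\theta_1,\theta_2)\,\sigma_1(x|\theta_1)\sigma_2(y|\theta_2).
\end{equation*}
The hypothesis $m_2(\theta_2)\bar u^{x,y}(\theta_1,\theta_2)=-m_1(\theta_1)\bar v^{x,y}(\theta_1,\theta_2)$ is equivalent to $\bar u^{x,y}(\theta_1,\theta_2)/m_1(\theta_1)=-\bar v^{x,y}(\theta_1,\theta_2)/m_2(\theta_2)$ for every quadruple $(x,y,\theta_1,\theta_2)$, and this is exactly what is needed to conclude $T_1+T_2=0$ termwise.

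Third, with $T_1+T_2$ annihilated, the objective collapses to $\sum_{\theta_1}\alpha_1(\theta_1)s_1(\theta_1)+\sum_{\theta_2}\alpha_2(\theta_2)s_2(\theta_2)$, which is linear in $(s_1,s_2)$. I would then verify positivity of the chosen weights, which is immediate since $\bar{b}_i(\theta_i)>0$ by assumption and $m_i(\theta_i)>0$ by hypothesis, so the admissibility requirement on $\alpha_i$ is met. Finally, I would observe that constraints (a)--(d) are linear in $(\sigma_1,\sigma_2,s_1,s_2)$ because each expectation there is taken with respect to only one of $\sigma_1$ or $\sigma_2$, not both. Combining a linear objective with linear constraints yields the advertised linear program.

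There is no serious obstacle; the argument is essentially a bookkeeping verification. The only point requiring mild care is the conversion between the conditional-then-marginal form $\alpha_i(\theta_i)b_i(\theta_j|\theta_i)$ and the joint form $b(\theta_1,\theta_2)/m_i(\theta_i)$, which relies on $\bar b_i(\theta_i)>0$ everywhere (the standing assumption on the marginals) so that the conditional densities $b_i$ are well-defined on the discretized grid $\bar\Theta_i^n$.
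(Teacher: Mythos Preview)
Your argument is correct and follows exactly the same approach as the paper: the paper simply asserts that the two bilinear terms in the objective sum to zero under the choice $\alpha_i(\theta_i)=\bar b_i(\theta_i)/m_i(\theta_i)$, and you have spelled out that verification (via the joint-form rewriting $\alpha_i(\theta_i)b_i(\theta_j|\theta_i)=b(\theta_1,\theta_2)/m_i(\theta_i)$) in the level of detail the paper omits. Your additional remarks on the linearity of the constraints and the positivity of the weights are also in line with the paper's implicit reasoning.
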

\begin{proof}
It is straightforward to verify that two bilinear terms always sum up to $0$, i.e., 
\begin{equation*}
    \sum_{\theta_1\in \bar{\Theta}^n_1} 
 \alpha_1(\theta_1)\mathbb{E}_{\theta_{2}\sim b_1(\cdot|\theta_1), x\sim \sigma_1,y \sim \sigma_{2} } \allowbreak [ \bar{u}^{x,y}(\theta_1,\theta_{2})]+\sum_{\theta_2 \in \bar{\Theta}^n_2 } 
 \alpha_2(\theta_2)  \mathbb{E}_{\theta_{1}\sim b_2(\cdot|\theta_2), x\sim \sigma_1,y \sim \sigma_{2} } \allowbreak [ \bar{v}^{x,y}(\theta_1,\theta_{2})] \equiv 0,  
\end{equation*}
for all feasible strategy pair $\sigma_1,\sigma_2$, if we choose  $\alpha_i(\theta_i)=\bar{b}_i(\theta_i)/ m_i(\theta_i)$. 
\end{proof}
Note that the condition $m_i(\theta_i)=1, \forall i\in \{1,2\}, \forall \theta_i\in \bar{\Theta}^n_i$, results in a zero-sum finite Bayesian game. Then, we can recast $C^K$ as a linear program by picking $\alpha_i(\theta_i)=\bar{b}_i(\theta_i), \forall \theta_i\in \bar{\Theta}^n_1$, which coincides with the existing result in  \cite{ponssard1980lp}. 

\bibliographystyle{plain}
\bibliography{reference}

\end{document}